\documentclass{article}
\usepackage[a4paper,top=30mm,bottom=35mm]{geometry}
\usepackage[english]{babel}
\usepackage{amsmath,amssymb}
\usepackage{amsthm,thmtools}
\usepackage{lmodern,microtype}
\usepackage[normalem]{ulem}
\usepackage{enumitem}
\usepackage{tikz}
\usepackage[colorlinks=true,linkcolor=blue!40!black,citecolor=blue!40!black,urlcolor=blue!40!black]{hyperref}
\usepackage[capitalise,noabbrev,nameinlink]{cleveref} 

\hypersetup{hypertexnames=false} 

\makeatletter
\newcommand{\authAaddr}[1]{\gdef\authA@address{\par\textsc{#1}}}
\newcommand{\authAemail}[1]{\gdef\authA@email{\par\textit{E-mail address:} \texttt{\href{mailto:#1}{#1}.}}}
\newcommand{\authBaddr}[1]{\gdef\authB@address{\par\textsc{#1}}}
\newcommand{\authBemail}[1]{\gdef\authB@email{\par\textit{E-mail address:} \texttt{\href{mailto:#1}{#1}.}}}
\newcommand{\authCaddr}[1]{\gdef\authC@address{\par\textsc{#1}}}
\newcommand{\authCemail}[1]{\gdef\authC@email{\par\textit{E-mail address:} \texttt{\href{mailto:#1}{#1}.}}}
\AtEndDocument{
	\par\bigskip\authA@address\authA@email
	\par\bigskip\authB@address\authB@email
	\par\bigskip\authC@address\authC@email
}
\makeatother

\author{Josse van Dobben de Bruyn \and Remy van Dobben de Bruyn \and Peter Zeman}
\title{Mermin--Peres magic rectangles modulo odd primes}
\date{17 September 2026}

\authAaddr{Department of Applied Mathematics, Faculty of Mathematics and Physics, Charles University, Czech Republic}
\authAemail{josse.van-dobben-de-bruyn@matfyz.cuni.cz}

\authBaddr{Max Planck Institute for Mathematics, Bonn, Germany}
\authBemail{dobbendebruyn@mpim-bonn.mpg.de}

\authCaddr{Department of Algebra, Faculty of Mathematics and Physics, Charles University, Czech Republic}
\authCemail{peter.zeman@matfyz.cuni.cz}

\declaretheorem[style=definition,numberwithin=section]{definition}
\declaretheorem[style=definition,numberlike=definition]{construction}
\declaretheorem[style=definition,numberlike=definition]{question}

\declaretheorem[style=plain,numberlike=definition]{proposition}
\declaretheorem[style=plain,numberlike=definition]{lemma}
\declaretheorem[style=plain,numberlike=definition]{theorem}

\declaretheorem[style=plain,numberlike=definition]{conjecture}

\declaretheorem[style=remark,numberlike=definition]{remark}

\newcommand{\Z}{\mathbb{Z}}
\newcommand{\C}{\mathbb{C}}
\newcommand{\F}{\mathbb{F}}
\newcommand{\OO}{\mathcal O}

\DeclareMathOperator{\id}{id}
\DeclareMathOperator{\spn}{span}
\DeclareMathOperator{\ev}{ev}

\begin{document}

\maketitle

\begin{abstract}
    The Mermin--Peres magic square provides a simple example of a system of linear equations over $\Z/2\Z$ which has no classical solutions but does have a finite-dimensional operator solution.
    For a long time, it was not known how to construct similar examples over $\Z/d\Z$ with $d$ odd.
    In this paper, we construct, for every integer $d\ge2$, a linear system over $\Z/d\Z$
    that has a finite-dimensional operator solution but no classical solution.
    For an odd prime $p$, our operators act on two $p$-dimensional qudits and
    generate a finite $p$-group obtained by adjoining diagonal polynomial phase
    operators to the generalized Pauli group. Classical inconsistency follows
    from an elementary linearity argument comparing assignments on abelian
    subgroups.
\end{abstract}

\section{Introduction}

The Mermin--Peres magic square \cite{Peres,Mermin} is one of the central examples in the foundations of quantum theory.
It gives rise to one of the simplest known proofs of Bell's theorem \cite{Mermin,Aravind-Bell} and the Kochen--Specker theorem (see e.g.~\cite{Budroni-Cabello-Guhne-Kleinmann-Larsson}), and it is one of the simplest examples of pseudotelepathy \cite{Brassard-Broadbent-Tapp}.

Nowadays, the Mermin--Peres square is commonly studied in the language of nonlocal games.
It forms the basis for a class of nonlocal games called \emph{linear constraint system \textup(LCS\textup) games}, which have been studied extensively \cite{Arkhipov,Cleve-Mittal,Cleve-Liu-Slofstra,Kim-Paulsen-Schafhauser,Slofstra-Tsirelson,Slofstra-closed,Coladangelo-Stark,Goldberg,PRSS,Slofstra-Zhang}.
Notably, LCS games were used by Slofstra to prove deep results in quantum information theory, including a negative answer to the strong Tsirelson problem \cite{Slofstra-Tsirelson} and the first proof that the set of quantum correlations is not closed \cite{Slofstra-closed}.

LCS games over $\Z/2\Z$ (the field of two elements) are quite well understood, and almost all known results are proved in this setting.
By contrast, LCS games over $\Z/d\Z$ for $d$ odd remain poorly understood, since it appears to be much harder to obtain pseudotelepathy here.
For a long time, the following problem has been open:
\begin{question}
    \label{q:main}
    Does there exist an odd integer $d$ and a linear system $Ax = b$ over $\Z/d\Z$ that does not have a classical solution but does have a finite-dimensional operator solution?
\end{question}

For the precise definition of operator solutions, see \cref{sec:gp-sols}.
In essence, \cref{q:main} asks for analogues of the Mermin--Peres magic square over $\Z/d\Z$ with $d$ odd.
A linear system that does not admit a classical solution but does admit a finite-dimensional operator solution is said to admit \emph{pseudotelepathy} \cite{Brassard-Broadbent-Tapp}.
In the world of quantum constraint satisfaction problems, this is sometimes called a \emph{separation of the first kind} \cite{Bulatov-Zivny}.

Several authors tried to construct examples answering \cref{q:main} \cite{Coladangelo-Stark,Qassim-Wallman,Frembs-Okay-Chung,Chung-Okay-Sikora,Bulatov-Zivny}.
On multiple occasions, such examples were announced on arXiv but later retracted due to a mistake.
Some negative results are known as well: a linear system modulo odd $d$ with solutions using generalized Pauli matrices also has a classical solution \cite{Qassim-Wallman}, and other families of finite $p$-groups have been ruled out as well \cite{Frembs-Okay-Chung,Chung-Okay-Sikora}.
Moreover, the special class of graph incidence systems shows a clear separation between $d = 2$ and $d > 2$: for $d = 2$, the graph LCS admits pseudotelepathy if and only if the graph is non-planar, whereas for $d > 2$ a necessary (but not sufficient) condition is that the graph is not projective planar \cite{Slofstra-Zhang,Dobben-Roberson-Negami}.
Since the Mermin--Peres magic square and Mermin's magic pentagram \cite{Mermin-pentagram} can be seen as graph LCS corresponding to the graphs $K_{3,3}$ and $K_5$ (see \cite{Arkhipov}), which are non-planar but still projective planar, it immediately follows that these two graph incidence systems do not admit operator solutions over $\Z/d\Z$ with $d > 2$.

Recently, Chung, Okay and Sikora conjectured a negative answer to \cref{q:main}.

\begin{conjecture}[{\cite[Conjecture~3.16]{Chung-Okay-Sikora}}]
    \label{conj:Chung-Okay-Sikora}
    If $d$ is odd, then any linear system $Ax = b$ over $\Z/d\Z$ which has a finite-dimensional operator solution also has a classical solution.
\end{conjecture}

The first positive results towards \cref{q:main} were obtained by Slofstra and Zhang \cite{Slofstra-Zhang}, who provided a wealth of examples where there is no classical solution but there is a possibly infinite-dimensional operator solution (for arbitrary $d$).
The first positive answer to \cref{q:main}, and thus the first counterexample to \cref{conj:Chung-Okay-Sikora}, was found by the first author and Roberson \cite{Dobben-Roberson-Negami}, who found the first examples for small odd primes.

In this paper, we provide a positive answer to \cref{q:main} for all $d$.

\begin{theorem}[Main result]
    \label{thm:main}
    For every integer $d \geq 2$, there exists a linear system $Ax = b$ over $\Z/d\Z$ that does not have a classical solution but does have a finite-dimensional operator solution.
\end{theorem}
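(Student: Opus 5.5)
The plan has two stages. First I reduce to the case where $d=p$ is a prime. Then for $p=2$ I use the Mermin--Peres square, and for odd $p$ I build an explicit system.

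\textbf{Reduction to primes.} Suppose $p \mid d$ and $Ax=b$ is a system over $\Z/p\Z$ with an operator solution and no classical solution. I lift the entries of $A$ and $b$ to $\Z/d\Z$ by multiplying them by $d/p$. An operator solution $X_j$ of order $p$ becomes an operator solution of the lifted system, because $\omega_d^{d/p}=\omega_p$. The remaining issue is that a classical solution of the lifted system, with values $x_j\in\Z/d\Z$, need not have $x_j$ divisible by $d/p$. So instead I add extra equations $(p)\cdot x_j = 0$ for each variable $j$; these force $x_j \in (d/p)\Z/d\Z \cong \Z/p\Z$ in any classical solution. Each such equation involves a single variable, so its context is commuting, and the operators satisfy $X_j^p=1$. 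A classical solution of the enlarged system then restricts to a classical solution modulo $p$, which does not exist. This reduces \cref{thm:main} to $d=p$ prime.

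\textbf{Operator group for odd $p$.} For odd $p$, I work on $\C^p\otimes\C^p$. The group $G$ is generated by the generalized Pauli operators $X_i,Z_i$ together with a diagonal phase gate such as $D=\mathrm{diag}(\omega^{k^3})$ on one qudit, or a two-qudit diagonal gate $\omega^{f(k,l)}$ with $f$ a cubic polynomial. Since these are Clifford-hierarchy gates, conjugating a Pauli by $D$ gives a Pauli times a diagonal quadratic phase. Hence $G$ is a finite $p$-group whose elements have order $p$ (adjusting by central phases if needed). I then choose a finite family of elements $g_1,\dots,g_n$ of order $p$, together with contexts. A context is a set of pairwise commuting $g_j$ whose product is a scalar $\omega^{b_i}$. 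These give equations $\sum_{j\in C_i} x_j = b_i$. By definition (\cref{sec:gp-sols}), the $g_j$ then form an operator solution.

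\textbf{Classical inconsistency.} The key step is to show that no classical assignment satisfies all the equations. The abstract's hint suggests the following linearity argument. On each maximal abelian subgroup $H$ covered by the contexts, a classical assignment respecting the relations must be a homomorphism $H\to\Z/p\Z$, i.e.\ it must be linear. I would organise the contexts so that several abelian subgroups overlap, and compare the linear functionals they induce on their intersections. Conjugation by the cubic gate twists the Pauli labels quadratically. Consistency would therefore force a quadratic (or cubic) function on $\F_p$ to coincide with an affine one, which is impossible for odd $p$ once the degree is less than $p$. Summing the equations with suitable coefficients should then yield $0=c\ne0$.

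I expect this stage to be the main obstacle. The contexts must be rich enough to pin down linearity on each subgroup, while the scalar phases $b_i$ are computed exactly. I would first try $p=3$ by explicit computation, then generalize the pattern using polynomial identities in $k$.
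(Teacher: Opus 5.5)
Your reduction to a prime $p\mid d$ (adding $p\,x_j=0$ for every variable) and the use of Mermin--Peres for $p=2$ match the paper, with one correction. Only $b$ should be multiplied by $d/p$; the entries of $A$ are just lifted to representatives in $\{0,\dots,p-1\}$. If you also scale $A$, the row products of your operators become $\omega_p^{(d/p)b_i}$ instead of $\omega_d^{(d/p)b_i}=\omega_p^{b_i}$, and they are even trivial when $p^2\mid d$.

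The genuine gap is the odd-prime case, which is the entire content of the theorem. You give no system and no inconsistency proof, only a heuristic, and the group you propose provably cannot work for $p\ge5$. Paulis together with cubic diagonal phases generate a subgroup of $\rho(G_{p,3})$. One can check two facts about it:
\begin{itemize}
\item Every commutator with a shift applies a finite difference and lowers the degree of the phase polynomial, so the group has nilpotency class at most $4<p$.
\item Since $\sum_{i\in\F_p}f(z+ia)=0$ whenever $\deg f\le p-2$, it has exponent $p$. Your claim that all elements have order $p$ is true here, and that is precisely the problem.
\end{itemize}
By the Lazard correspondence, the same set then carries an $\F_p$-Lie algebra structure in which commuting elements simply add. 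Hence any linear functional $\ell$ with $\ell(J)=1$ is additive on commuting pairs, and every system with an operator solution in your group also has a classical solution. For $p=3$, the one-qudit cubic gate is just $Z$, since $k^3=k$.

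The paper needs phases of degree up to $p-1$. In $G_{p,p-1}=\mathcal A_{p-1}\rtimes E$ the class reaches $p$, and for example $(x^{p-1},e_1)^p=J^{-1}$. So one restricts to $\OO_p(G)$ and uses the system of all commuting pairs from \cref{cons:finite-system}, whose operator solution is automatic. The obstruction then lives in the abelian subgroups $\tau_g(E)=\{D_g^\dagger S_aD_g\}$ with $\deg g=p$:
\begin{itemize}
\item $\lambda$ is linear on each such subgroup.
\item When $\Delta_u^2k=0$, the factorization $\tau_{g+k}(u)=(\Delta_uk,0)\,\tau_g(u)$ is into commuting factors, which gives the comparison equations.
\item \cref{lem:linearity} makes the slope map $\Phi$ linear on $W=\spn\{x(y-tx)^{p-1}:t\in\F_p\}$.
\item Summing over $t$ with $\sum_t t^{p-1}=-1$ forces $\lambda(J)=0$.
\end{itemize}
So your heuristic ``impossible once the degree is less than $p$'' points the wrong way. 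With phases of degree at most $p-2$ there is no obstruction at all. It first appears when the polynomials $g$ reach degree $p$, i.e.\ phases of degree $p-1$.
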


\begin{remark}
    Very recently, another, independent proof of \cref{thm:main} was announced by Ciardo \cite{Ciardo}, using completely different techniques.
    He does not provide concrete examples of linear systems or operator solutions.
\end{remark}

Our proof is constructive.
For each prime $p$ and integer $d \geq 1$, we construct a finite $p$-group $G_{p,d}$.
We then show that there is a linear system $Ax = b$ over $\Z/p\Z$ that does not have a classical solution but does have a group-valued solution taking values in $G_{p,p-1}$.

The groups $G_{p,d}$ consist of generalized Pauli matrices acting on two $p$-dimensional qudits, but with additional diagonal matrices added.
The extra diagonal matrices in $G_{p,d}$ have the form $D_f |z\rangle = \omega^{f(z)} |z\rangle$ acting on the Hilbert space $\mathcal H = \spn\{ |z\rangle : z \in \F_p^2\}$, where $f \in \F_p[x,y]$ ranges over all bivariate polynomials of total degree $\leq d$ (see \cref{sec:construction} for further details).
Setting $d = 1$ gives the generalized Pauli group on two $p$-dimensional qudits, so for $d > 1$ we get a strictly larger group that still bears a strong resemblance to the generalized Pauli group.
To get pseudotelepathy, for $p = 2$ the group $G_{p,p-1} = G_{p,1}$ that we use is exactly the two-qubit Pauli group, but for $p \geq 3$ the group $G_{p,p-1}$ that we use is strictly larger than the generalized Pauli group $G_{p,1}$.

For small $p$, we are able to write down concrete realizations of our result, leading to odd qudit analogues of the Mermin--Peres magic square.
The operators in our examples have dimension $p^2$, just like the Mermin--Peres magic square, and the structure of their groups is also very similar as explained above.
This makes them very natural analogues of the Mermin--Peres magic square.
By contrast, the operators in the smallest examples found in \cite{Dobben-Roberson-Negami} are $81$-dimensional.
On the other hand, Ciardo \cite{Ciardo} claims operator solutions in dimension $2p$, so his operators cannot be an irreducible representation of a finite $p$-group (since the dimension is not a power of $p$).

\paragraph{Acknowledgements}

We are grateful to William Slofstra for many helpful discussions on LCS games and solution groups.
JvDdB was supported by the Carlsberg Foundation Young Researcher Fellowship CF21-0682 --- ``Quantum Graph Theory'' and by GAČR grant 25-17377S.
RvDdB was funded by NWO grant VI.Veni.212.204 and ERC Horizon grant no.~101042990.
PZ was funded by the European Union (ERC, POCOCOP, 101071674). Views and opinions expressed are however those of the author(s) only and do not necessarily reflect those of the European Union or the European Research Council Executive Agency. Neither the European Union nor the granting authority can be held responsible for them.
Part of the research was carried out while JvDdB and PZ were in residence at the Isaac Newton Institute for Mathematical Sciences in Cambridge, United Kingdom during the research programme ``Quantum Information, Quantum Groups and Operator Algebras'' in November--December 2024, supported by EPSRC grant EP/Z000580/1.
Part of the research was carried out while JvDdB and PZ were in residence at Institut Mittag-Leffler in Djursholm, Sweden during the research semester ``Operator Algebras and Quantum Information'' in the spring of 2026, supported by the Swedish Research Council under grant no.~2021-06594.

\paragraph{AI disclosure}

No AI was used in the research or writing of this paper.

\section{A magic rectangle modulo 3}

Before we dive into our construction and proof, we present a small Mermin--Peres magic rectangle over $\Z/3\Z$ that was constructed from the group $G_{3,2}$.
Here, we have a $5 \times 6$ rectangle filled with $9$-dimensional operator entries, with empty cells containing the identity operator.
The operators within each row or column commute.
The product along each row or column is the identity, except the last row, where the product is $\omega I$ (where $\omega = e^{2\pi i / 3}$).
This represents an operator solution to a linear system with $18$ variables (the non-empty cells) and $11$ equations (the rows and columns), with the condition that the sum of the variables along each row or column is $0 \pmod 3$, except the last row, where the sum if $1 \pmod 3$.
Evidently this linear system has no classical solution, since the sum of all entries of the magic rectangle cannot simultaneously be $0 \pmod 3$ and $1 \pmod 3$.

\begin{center}
	\begin{tikzpicture}[scale=2.4]
		\setlength\arraycolsep{3pt}
		\newcommand{\nil}{\textcolor{gray!70}{0}}
		\def\myscale{.55}
		\def\myomega{\mathbf{\omega}}
		\def\myomegasq{\overline{\mathbf{\omega}}}
		\def\een{\bm{1}}
		\draw (0,0) grid (6,5);
		\begin{scope}[xshift=-5mm,yshift=-5mm]
			\node[scale=\myscale] at (1, 5) {$\begin{pmatrix}
					 \nil & \nil & \nil & 1 & \nil & \nil & \nil & \nil & \nil \\
					 \nil & \nil & \nil & \nil & 1 & \nil & \nil & \nil & \nil \\
					 \nil & \nil & \nil & \nil & \nil & 1 & \nil & \nil & \nil \\
					 \nil & \nil & \nil & \nil & \nil & \nil & 1 & \nil & \nil \\
					 \nil & \nil & \nil & \nil & \nil & \nil & \nil & 1 & \nil \\
					 \nil & \nil & \nil & \nil & \nil & \nil & \nil & \nil & 1 \\
					 1 & \nil & \nil & \nil & \nil & \nil & \nil & \nil & \nil \\
					 \nil & 1 & \nil & \nil & \nil & \nil & \nil & \nil & \nil \\
					 \nil & \nil & 1 & \nil & \nil & \nil & \nil & \nil & \nil
				\end{pmatrix}$};
			\node[scale=\myscale] at (2, 5) {$\begin{pmatrix}
					 \nil & 1 & \nil & \nil & \nil & \nil & \nil & \nil & \nil \\
					 \nil & \nil & 1 & \nil & \nil & \nil & \nil & \nil & \nil \\
					 1 & \nil & \nil & \nil & \nil & \nil & \nil & \nil & \nil \\
					 \nil & \nil & \nil & \nil & 1 & \nil & \nil & \nil & \nil \\
					 \nil & \nil & \nil & \nil & \nil & 1 & \nil & \nil & \nil \\
					 \nil & \nil & \nil & 1 & \nil & \nil & \nil & \nil & \nil \\
					 \nil & \nil & \nil & \nil & \nil & \nil & \nil & 1 & \nil \\
					 \nil & \nil & \nil & \nil & \nil & \nil & \nil & \nil & 1 \\
					 \nil & \nil & \nil & \nil & \nil & \nil & 1 & \nil & \nil
				\end{pmatrix}$};
			\node[scale=\myscale] at (3, 5) {$\begin{pmatrix}
					 \nil & \nil & \nil & \nil & \nil & \nil & \nil & \nil & 1 \\
					 \nil & \nil & \nil & \nil & \nil & \nil & 1 & \nil & \nil \\
					 \nil & \nil & \nil & \nil & \nil & \nil & \nil & 1 & \nil \\
					 \nil & \nil & 1 & \nil & \nil & \nil & \nil & \nil & \nil \\
					 1 & \nil & \nil & \nil & \nil & \nil & \nil & \nil & \nil \\
					 \nil & 1 & \nil & \nil & \nil & \nil & \nil & \nil & \nil \\
					 \nil & \nil & \nil & \nil & \nil & 1 & \nil & \nil & \nil \\
					 \nil & \nil & \nil & 1 & \nil & \nil & \nil & \nil & \nil \\
					 \nil & \nil & \nil & \nil & 1 & \nil & \nil & \nil & \nil
				\end{pmatrix}$};
			\node[scale=\myscale] at (1, 4) {$\begin{pmatrix}
					 \nil & \nil & \nil & \nil & \nil & \nil & 1 & \nil & \nil \\
					 \nil & \nil & \nil & \nil & \nil & \nil & \nil & \myomega & \nil \\
					 \nil & \nil & \nil & \nil & \nil & \nil & \nil & \nil & \myomega \\
					 1 & \nil & \nil & \nil & \nil & \nil & \nil & \nil & \nil \\
					 \nil & \myomega & \nil & \nil & \nil & \nil & \nil & \nil & \nil \\
					 \nil & \nil & \myomega & \nil & \nil & \nil & \nil & \nil & \nil \\
					 \nil & \nil & \nil & 1 & \nil & \nil & \nil & \nil & \nil \\
					 \nil & \nil & \nil & \nil & \myomega & \nil & \nil & \nil & \nil \\
					 \nil & \nil & \nil & \nil & \nil & \myomega & \nil & \nil & \nil
				\end{pmatrix}$};
			\node[scale=\myscale] at (4, 4) {$\begin{pmatrix}
					 \nil & 1 & \nil & \nil & \nil & \nil & \nil & \nil & \nil \\
					 \nil & \nil & 1 & \nil & \nil & \nil & \nil & \nil & \nil \\
					 1 & \nil & \nil & \nil & \nil & \nil & \nil & \nil & \nil \\
					 \nil & \nil & \nil & \nil & \myomegasq & \nil & \nil & \nil & \nil \\
					 \nil & \nil & \nil & \nil & \nil & 1 & \nil & \nil & \nil \\
					 \nil & \nil & \nil & \myomega & \nil & \nil & \nil & \nil & \nil \\
					 \nil & \nil & \nil & \nil & \nil & \nil & \nil & \myomega & \nil \\
					 \nil & \nil & \nil & \nil & \nil & \nil & \nil & \nil & 1 \\
					 \nil & \nil & \nil & \nil & \nil & \nil & \myomegasq & \nil & \nil
				\end{pmatrix}$};
			\node[scale=\myscale] at (5, 4) {$\begin{pmatrix}
					 \nil & \nil & \nil & \nil & \nil & \myomegasq & \nil & \nil & \nil \\
					 \nil & \nil & \nil & 1 & \nil & \nil & \nil & \nil & \nil \\
					 \nil & \nil & \nil & \nil & \myomegasq & \nil & \nil & \nil & \nil \\
					 \nil & \nil & \nil & \nil & \nil & \nil & \nil & \nil & \myomega \\
					 \nil & \nil & \nil & \nil & \nil & \nil & \myomega & \nil & \nil \\
					 \nil & \nil & \nil & \nil & \nil & \nil & \nil & \myomegasq & \nil \\
					 \nil & \nil & 1 & \nil & \nil & \nil & \nil & \nil & \nil \\
					 \myomegasq & \nil & \nil & \nil & \nil & \nil & \nil & \nil & \nil \\
					 \nil & \myomegasq & \nil & \nil & \nil & \nil & \nil & \nil & \nil
				\end{pmatrix}$};
			\node[scale=\myscale] at (2, 3) {$\begin{pmatrix}
					 \nil & \nil & 1 & \nil & \nil & \nil & \nil & \nil & \nil \\
					 1 & \nil & \nil & \nil & \nil & \nil & \nil & \nil & \nil \\
					 \nil & 1 & \nil & \nil & \nil & \nil & \nil & \nil & \nil \\
					 \nil & \nil & \nil & \nil & \nil & \myomega & \nil & \nil & \nil \\
					 \nil & \nil & \nil & \myomega & \nil & \nil & \nil & \nil & \nil \\
					 \nil & \nil & \nil & \nil & \myomega & \nil & \nil & \nil & \nil \\
					 \nil & \nil & \nil & \nil & \nil & \nil & \nil & \nil & \myomega \\
					 \nil & \nil & \nil & \nil & \nil & \nil & \myomega & \nil & \nil \\
					 \nil & \nil & \nil & \nil & \nil & \nil & \nil & \myomega & \nil
				\end{pmatrix}$};
			\node[scale=\myscale] at (5, 3) {$\begin{pmatrix}
					 \nil & \nil & \nil & \nil & \nil & \nil & \nil & \myomegasq & \nil \\
					 \nil & \nil & \nil & \nil & \nil & \nil & \nil & \nil & \myomega \\
					 \nil & \nil & \nil & \nil & \nil & \nil & 1 & \nil & \nil \\
					 \nil & 1 & \nil & \nil & \nil & \nil & \nil & \nil & \nil \\
					 \nil & \nil & \myomega & \nil & \nil & \nil & \nil & \nil & \nil \\
					 \myomegasq & \nil & \nil & \nil & \nil & \nil & \nil & \nil & \nil \\
					 \nil & \nil & \nil & \nil & \myomegasq & \nil & \nil & \nil & \nil \\
					 \nil & \nil & \nil & \nil & \nil & \myomegasq & \nil & \nil & \nil \\
					 \nil & \nil & \nil & \myomegasq & \nil & \nil & \nil & \nil & \nil
				\end{pmatrix}$};
			\node[scale=\myscale] at (6, 3) {$\begin{pmatrix}
					 \nil & \nil & \nil & 1 & \nil & \nil & \nil & \nil & \nil \\
					 \nil & \nil & \nil & \nil & \myomegasq & \nil & \nil & \nil & \nil \\
					 \nil & \nil & \nil & \nil & \nil & \myomega & \nil & \nil & \nil \\
					 \nil & \nil & \nil & \nil & \nil & \nil & 1 & \nil & \nil \\
					 \nil & \nil & \nil & \nil & \nil & \nil & \nil & 1 & \nil \\
					 \nil & \nil & \nil & \nil & \nil & \nil & \nil & \nil & 1 \\
					 1 & \nil & \nil & \nil & \nil & \nil & \nil & \nil & \nil \\
					 \nil & \myomega & \nil & \nil & \nil & \nil & \nil & \nil & \nil \\
					 \nil & \nil & \myomegasq & \nil & \nil & \nil & \nil & \nil & \nil
				\end{pmatrix}$};
			\node[scale=\myscale] at (3, 2) {$\begin{pmatrix}
					 \nil & \nil & \nil & \nil & \myomega & \nil & \nil & \nil & \nil \\
					 \nil & \nil & \nil & \nil & \nil & 1 & \nil & \nil & \nil \\
					 \nil & \nil & \nil & 1 & \nil & \nil & \nil & \nil & \nil \\
					 \nil & \nil & \nil & \nil & \nil & \nil & \nil & 1 & \nil \\
					 \nil & \nil & \nil & \nil & \nil & \nil & \nil & \nil & \myomega \\
					 \nil & \nil & \nil & \nil & \nil & \nil & 1 & \nil & \nil \\
					 \nil & 1 & \nil & \nil & \nil & \nil & \nil & \nil & \nil \\
					 \nil & \nil & 1 & \nil & \nil & \nil & \nil & \nil & \nil \\
					 \myomega & \nil & \nil & \nil & \nil & \nil & \nil & \nil & \nil
				\end{pmatrix}$};
			\node[scale=\myscale] at (4, 2) {$\begin{pmatrix}
					 \nil & \nil & 1 & \nil & \nil & \nil & \nil & \nil & \nil \\
					 1 & \nil & \nil & \nil & \nil & \nil & \nil & \nil & \nil \\
					 \nil & 1 & \nil & \nil & \nil & \nil & \nil & \nil & \nil \\
					 \nil & \nil & \nil & \nil & \nil & 1 & \nil & \nil & \nil \\
					 \nil & \nil & \nil & \myomegasq & \nil & \nil & \nil & \nil & \nil \\
					 \nil & \nil & \nil & \nil & \myomega & \nil & \nil & \nil & \nil \\
					 \nil & \nil & \nil & \nil & \nil & \nil & \nil & \nil & \myomegasq \\
					 \nil & \nil & \nil & \nil & \nil & \nil & 1 & \nil & \nil \\
					 \nil & \nil & \nil & \nil & \nil & \nil & \nil & \myomega & \nil
				\end{pmatrix}$};
			\node[scale=\myscale] at (6, 2) {$\begin{pmatrix}
					 \nil & \nil & \nil & \nil & \nil & \nil & 1 & \nil & \nil \\
					 \nil & \nil & \nil & \nil & \nil & \nil & \nil & 1 & \nil \\
					 \nil & \nil & \nil & \nil & \nil & \nil & \nil & \nil & \myomegasq \\
					 1 & \nil & \nil & \nil & \nil & \nil & \nil & \nil & \nil \\
					 \nil & \myomegasq & \nil & \nil & \nil & \nil & \nil & \nil & \nil \\
					 \nil & \nil & 1 & \nil & \nil & \nil & \nil & \nil & \nil \\
					 \nil & \nil & \nil & 1 & \nil & \nil & \nil & \nil & \nil \\
					 \nil & \nil & \nil & \nil & \myomega & \nil & \nil & \nil & \nil \\
					 \nil & \nil & \nil & \nil & \nil & \myomega & \nil & \nil & \nil
				\end{pmatrix}$};
			\node[scale=\myscale] at (1, 1) {$\begin{pmatrix}
					 1 & \nil & \nil & \nil & \nil & \nil & \nil & \nil & \nil \\
					 \nil & \myomegasq & \nil & \nil & \nil & \nil & \nil & \nil & \nil \\
					 \nil & \nil & \myomegasq & \nil & \nil & \nil & \nil & \nil & \nil \\
					 \nil & \nil & \nil & 1 & \nil & \nil & \nil & \nil & \nil \\
					 \nil & \nil & \nil & \nil & \myomegasq & \nil & \nil & \nil & \nil \\
					 \nil & \nil & \nil & \nil & \nil & \myomegasq & \nil & \nil & \nil \\
					 \nil & \nil & \nil & \nil & \nil & \nil & 1 & \nil & \nil \\
					 \nil & \nil & \nil & \nil & \nil & \nil & \nil & \myomegasq & \nil \\
					 \nil & \nil & \nil & \nil & \nil & \nil & \nil & \nil & \myomegasq
				\end{pmatrix}$};
			\node[scale=\myscale] at (2, 1) {$\begin{pmatrix}
					 1 & \nil & \nil & \nil & \nil & \nil & \nil & \nil & \nil \\
					 \nil & 1 & \nil & \nil & \nil & \nil & \nil & \nil & \nil \\
					 \nil & \nil & 1 & \nil & \nil & \nil & \nil & \nil & \nil \\
					 \nil & \nil & \nil & \myomegasq & \nil & \nil & \nil & \nil & \nil \\
					 \nil & \nil & \nil & \nil & \myomegasq & \nil & \nil & \nil & \nil \\
					 \nil & \nil & \nil & \nil & \nil & \myomegasq & \nil & \nil & \nil \\
					 \nil & \nil & \nil & \nil & \nil & \nil & \myomegasq & \nil & \nil \\
					 \nil & \nil & \nil & \nil & \nil & \nil & \nil & \myomegasq & \nil \\
					 \nil & \nil & \nil & \nil & \nil & \nil & \nil & \nil & \myomegasq
				\end{pmatrix}$};
			\node[scale=\myscale] at (3, 1) {$\begin{pmatrix}
					 \myomegasq & \nil & \nil & \nil & \nil & \nil & \nil & \nil & \nil \\
					 \nil & 1 & \nil & \nil & \nil & \nil & \nil & \nil & \nil \\
					 \nil & \nil & 1 & \nil & \nil & \nil & \nil & \nil & \nil \\
					 \nil & \nil & \nil & 1 & \nil & \nil & \nil & \nil & \nil \\
					 \nil & \nil & \nil & \nil & \myomegasq & \nil & \nil & \nil & \nil \\
					 \nil & \nil & \nil & \nil & \nil & 1 & \nil & \nil & \nil \\
					 \nil & \nil & \nil & \nil & \nil & \nil & 1 & \nil & \nil \\
					 \nil & \nil & \nil & \nil & \nil & \nil & \nil & 1 & \nil \\
					 \nil & \nil & \nil & \nil & \nil & \nil & \nil & \nil & \myomegasq
				\end{pmatrix}$};
			\node[scale=\myscale] at (4, 1) {$\begin{pmatrix}
					 1 & \nil & \nil & \nil & \nil & \nil & \nil & \nil & \nil \\
					 \nil & 1 & \nil & \nil & \nil & \nil & \nil & \nil & \nil \\
					 \nil & \nil & 1 & \nil & \nil & \nil & \nil & \nil & \nil \\
					 \nil & \nil & \nil & \myomegasq & \nil & \nil & \nil & \nil & \nil \\
					 \nil & \nil & \nil & \nil & \myomegasq & \nil & \nil & \nil & \nil \\
					 \nil & \nil & \nil & \nil & \nil & \myomegasq & \nil & \nil & \nil \\
					 \nil & \nil & \nil & \nil & \nil & \nil & \myomegasq & \nil & \nil \\
					 \nil & \nil & \nil & \nil & \nil & \nil & \nil & \myomegasq & \nil \\
					 \nil & \nil & \nil & \nil & \nil & \nil & \nil & \nil & \myomegasq
				\end{pmatrix}$};
			\node[scale=\myscale] at (5, 1) {$\begin{pmatrix}
					 \myomegasq & \nil & \nil & \nil & \nil & \nil & \nil & \nil & \nil \\
					 \nil & 1 & \nil & \nil & \nil & \nil & \nil & \nil & \nil \\
					 \nil & \nil & 1 & \nil & \nil & \nil & \nil & \nil & \nil \\
					 \nil & \nil & \nil & 1 & \nil & \nil & \nil & \nil & \nil \\
					 \nil & \nil & \nil & \nil & 1 & \nil & \nil & \nil & \nil \\
					 \nil & \nil & \nil & \nil & \nil & \myomegasq & \nil & \nil & \nil \\
					 \nil & \nil & \nil & \nil & \nil & \nil & 1 & \nil & \nil \\
					 \nil & \nil & \nil & \nil & \nil & \nil & \nil & \myomegasq & \nil \\
					 \nil & \nil & \nil & \nil & \nil & \nil & \nil & \nil & 1
				\end{pmatrix}$};
			\node[scale=\myscale] at (6, 1) {$\begin{pmatrix}
					 1 & \nil & \nil & \nil & \nil & \nil & \nil & \nil & \nil \\
					 \nil & \myomegasq & \nil & \nil & \nil & \nil & \nil & \nil & \nil \\
					 \nil & \nil & \myomegasq & \nil & \nil & \nil & \nil & \nil & \nil \\
					 \nil & \nil & \nil & 1 & \nil & \nil & \nil & \nil & \nil \\
					 \nil & \nil & \nil & \nil & \myomegasq & \nil & \nil & \nil & \nil \\
					 \nil & \nil & \nil & \nil & \nil & \myomegasq & \nil & \nil & \nil \\
					 \nil & \nil & \nil & \nil & \nil & \nil & 1 & \nil & \nil \\
					 \nil & \nil & \nil & \nil & \nil & \nil & \nil & \myomegasq & \nil \\
					 \nil & \nil & \nil & \nil & \nil & \nil & \nil & \nil & \myomegasq
				\end{pmatrix}$};
		\end{scope}
	\end{tikzpicture}
\end{center}

\section{Group-valued solutions to linear systems}
\label{sec:gp-sols}

Let $d\ge2$, let $A=(A_{ij})\in(\Z/d\Z)^{m\times n}$, and let
$b\in(\Z/d\Z)^m$. Write $\omega_d=e^{2\pi i/d}$.

\begin{definition}\label{def:classical}
A \emph{classical solution} of $Ax=b$ is a vector
$x\in(\Z/d\Z)^n$ satisfying
\[
 \sum_{j=1}^n A_{ij}x_j=b_i,\qquad 1\le i\le m.
\]
\end{definition}

Encoding a scalar $a\in\Z/d\Z$ by $\omega_d^a$ turns addition into
multiplication. This motivates the following definition.

\begin{definition}\label{def:operator-solution}
An \emph{operator solution} of $Ax=b$ consists of unitary operators
$U_1,\ldots,U_n$ on a common nonzero complex Hilbert space
$\mathcal H$, satisfying:
\begin{enumerate}[label=\textup{(\roman*)}]
\item $U_j^d=I$ for every $j$;
\item $U_jU_k=U_kU_j$ whenever $A_{ij}\ne0$ and $A_{ik}\ne0$
for some row $i$;
\item for every row $i$,
\begin{equation}\label{eq:operator-solution-row}
 \prod_{j=1}^n U_j^{A_{ij}}=\omega_d^{b_i}I.
\end{equation}
\end{enumerate}
The solution is \emph{finite-dimensional} if $\dim\mathcal H<\infty$.
\end{definition}

The condition $U_j^d=I$ requires the order of $U_j$ to divide $d$;
it need not equal $d$. Exponents are interpreted modulo $d$, and
the commutation condition makes each row product independent of its
order. Classical solutions correspond exactly to one-dimensional
operator solutions, via $U_j=\omega_d^{x_j}$.

\begin{definition}\label{def:group-solution}
A \emph{group-valued solution} of $Ax=b$ consists of a group $G$,
a distinguished central element $J\in G$ of order $d$, and elements
$h_1,\ldots,h_n\in G$ such that
\[
 h_j^d=e_G,\quad 1\le j\le n,\qquad
 \prod_{j=1}^n h_j^{A_{ij}}=J^{b_i},\quad 1\le i\le m,
\]
with $h_jh_k=h_kh_j$ whenever $A_{ij}\ne0$ and $A_{ik}\ne0$
for some row $i$. Here $e_G$ denotes the identity of $G$.
For fixed $G$ and $J$, we also call this a \emph{$G$-valued solution}.
\end{definition}

The element $J$ represents the scalar $1 \in \Z/d\Z$. Since it has order exactly
$d$, classical solutions correspond to the $G$-valued solutions with
$h_j\in\langle J\rangle$, via $h_j=J^{x_j}$.
If $\rho\colon G\to U(\mathcal H)$ is a unitary representation on a
nonzero Hilbert space with $\rho(J)=\omega_d I$, then
$U_j=\rho(h_j)$ is an operator solution:
the representation preserves powers, commutation, and row products.

We now construct a linear system from any finite group $G$ with a
distinguished central element $J$ of order $d$. Put
\[
 \OO_d(G)=\{u\in G:u^d=e_G\}.
\]
If $u,v\in\OO_d(G)$ commute, then
$(uv)^d=u^dv^d=e_G$, so $uv\in\OO_d(G)$.

\begin{construction}\label{cons:finite-system}
Introduce a variable $x_u$ for every $u\in\OO_d(G)$ and consider the linear system consisting of the following equations over $\Z/d\Z$:
\begin{equation}\label{eq:system}
 x_u+x_v-x_{uv}=0,
 \quad u,v\in\OO_d(G),\ uv=vu,\qquad x_J=1.
\end{equation}
When variable indices coincide, their coefficients are added modulo $d$.
\end{construction}

\begin{proposition}\label{prop:system-reduction}
The finite system in \cref{cons:finite-system} has a $G$-valued solution.
A classical solution is exactly a function
$\lambda\colon\OO_d(G)\to\Z/d\Z$ satisfying
\begin{equation}\label{eq:additivity}
 \lambda(uv)=\lambda(u)+\lambda(v),\quad u,v\in\OO_d(G),\ uv=vu,
 \qquad \lambda(J)=1.
\end{equation}
\end{proposition}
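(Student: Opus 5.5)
For the $G$-valued solution I take the tautological assignment $h_u = u$ for every $u \in \OO_d(G)$, with the given central element $J$ of order $d$. The conditions of \cref{def:group-solution} then follow one by one. Each $h_u = u$ satisfies $h_u^d = e_G$ by the definition of $\OO_d(G)$.

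The commutation condition is checked row by row. A row of the form $x_u + x_v - x_{uv} = 0$ with $uv = vu$ involves only the variables $u$, $v$, $uv$. These pairwise commute, since $u$ commutes with $uv$ whenever $u$ commutes with $v$. This holds regardless of which coefficients survive after coincident indices are merged, because the surviving variables form a subset of $\{u, v, uv\}$. The row $x_J = 1$ involves a single variable.

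For the row products, the product in the row $x_u + x_v - x_{uv} = 0$ is $u \cdot v \cdot (uv)^{-1} = e_G = J^0$. This is order-independent by commutativity, and merging coefficients modulo $d$ is harmless because $h^d = e_G$. The row $x_J = 1$ gives $J = J^1$. Note that $J \in \OO_d(G)$ since $J$ has order $d$, so the variable $x_J$ exists.

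For the second claim, a classical solution is a vector $(x_u)_{u \in \OO_d(G)}$, which I view as a function $\lambda(u) = x_u$. The equations in \eqref{eq:system} translate verbatim into $\lambda(u) + \lambda(v) - \lambda(uv) = 0$ for commuting $u, v \in \OO_d(G)$, together with $\lambda(J) = 1$. This is exactly \eqref{eq:additivity}. The one point to mention is that when indices coincide (e.g.\ $u = v$, or $v = e_G$ so that $uv = u$), the merged equation is the same linear equation modulo $d$, so the identification is unaffected.

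There is no real obstacle; the only care needed is the bookkeeping with coinciding indices, in both the commutation condition and the exponent reduction modulo $d$.
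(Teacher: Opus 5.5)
Your proposal is correct and follows the same route as the paper's proof. The paper also uses the tautological assignment $h_u = u$ to get the $G$-valued solution, and sets $\lambda(u)=x_u$ to identify the equations with the additivity conditions. Your extra bookkeeping is accurate and makes explicit what the paper leaves implicit: coinciding indices, merged coefficients modulo $d$ being harmless since $h_u^d=e_G$, and $J\in\OO_d(G)$ so that $x_J$ exists.
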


\begin{proof}
Assign $u$ to $x_u$. For each commuting pair $u,v$, the elements
$u,v,uv$ commute pairwise and satisfy $uv(uv)^{-1}=e_G$.
The final equation becomes $J=J$, so this is a $G$-valued solution.
For a classical assignment, setting $\lambda(u)=x_u$ identifies
\eqref{eq:system} with \eqref{eq:additivity}.
\end{proof}

For prime moduli $d=p$, we identify $\Z/p\Z$ with $\F_p$.
Every map $\lambda\colon\OO_p(G)\to\F_p$ additive on commuting pairs satisfies
\[
 \lambda(e_G)=0,\qquad \lambda(u^r)=r\lambda(u),
 \quad u\in\OO_p(G),\ r\in\F_p,
\]
since the powers of $u$ commute.
To rule out classical solutions, it will suffice to show that every
function $\lambda\colon\OO_p(G)\to\F_p$ additive on commuting pairs
has $\lambda(J)=0$.

\section{The group and its representation as Pauli-like matrices}
\label{sec:construction}

Throughout the remainder of this paper, let $p \geq 2$ be a fixed prime number, let $\omega = \omega_p = e^{2\pi i/p}$, let $\F_p$ be the finite field of order $p$, and let $E$ be the additive group of the $2$-dimensional vector space $\F_p^2$ over $\F_p$.
Write $e_1=(1,0)$ and $e_2=(0,1)$ for its standard basis, and $x,y$ for the coordinate functions.

\begin{definition}
    Write $\mathcal A := \F_p[x,y]_{\leq (p-1,p-1)} = \spn\{ x^i y^j \mid i,j \in \{0,1,\ldots,p-1\} \}$ for the $\F_p$-vector space of bivariate polynomials of bidegree at most $(p-1,p-1)$.
	The \emph{evaluation map} $\ev\colon \F_p[x,y] \to \F_p^{E} = \{\text{functions $f\colon E \to \F_p$}\}$ sends a polynomial $Q \in \F_p[x,y]$ to the function $\ev(Q)\colon \F_p^2 \to \F_p$, $(r,s) \mapsto Q(r,s)$ it defines.
    Its kernel is the ideal $(x^p - x , y^p - y)$, and it restricts to an isomorphism $\mathcal A \stackrel{\sim}{\to} \F_p^E$.
	
	For $d \geq 0$, write $\mathcal P_d := \spn\{ x^i y^j \mid i + j \leq d \} \subseteq \F_p[x,y]$ for the space of bivariate polynomials of total degree at most $d$, and write $\mathcal A_d := \mathcal P_d \cap \mathcal A = \spn\{ x^i y^j \mid i,j \in \{0,1,\ldots,p-1\}, i + j \leq d\}$.
\end{definition}

Via $\ev$, we identify $\mathcal A$ with the space of functions on $E$.
Under this identification, $\mathcal A_d$ consists exactly of the functions
with a polynomial representative of total degree at most $d$: reducing
modulo $x^p-x$ and $y^p-y$ does not increase total degree.
In particular, identities in $\mathcal A$ are identities of functions,
so $x^p=x$ and $y^p=y$.

\begin{definition}
	For $a \in E$, define the \emph{translation operator} by $(T_af)(z) = f(z + a)$, and the \emph{finite difference in direction $a$} by $(\Delta_af)(z) = f(z + a) - f(z)$.
\end{definition}
	
The translation operator satisfies
\[ T_a T_b = T_{a+b} , \qquad T_0 = \id , \qquad T_a^{-1} = T_{-a}. \]
Moreover, since
\[ (x + r)^k (y + s)^\ell = \Bigg( \sum_{i=0}^k \binom{k}{i} r^{k-i} x^i \Bigg) \Bigg( \sum_{j=0}^\ell \binom{\ell}{j} s^{\ell-j} y^j \Bigg) = \sum_{i=0}^k \sum_{j=0}^\ell \binom{k}{i} \binom{\ell}{j} r^{k-i} s^{\ell-j} x^i y^j  , \]
we see that $T_a$ preserves the $x$-degree, $y$-degree, and total degree of each polynomial.
The highest-degree terms cancel in $T_af-f$, so
$\Delta_a(\mathcal A_{r+1})\subseteq\mathcal A_r$ for every $r\ge0$.

\begin{definition}
	For $0 \leq d \leq p - 1$, define the group $G_{p,d}$ as the semidirect product $G_{p,d} = \mathcal A_d \rtimes E$, where $E$ acts on $\mathcal A_d$ by translation.
\end{definition}

In other words, elements of $G_{p,d}$ are tuples $(f,a) \in \mathcal A_d \times E$, with multiplication given by
\[ (f,a)(h,b) = (f + T_ah , a + b). \]
This is a finite group with identity $(0,0)$. The subgroup
$\mathcal A_d\times\{0\}$ is additive, and $J=(1,0)$ is central of order $p$.
The multiplication rule also shows that $(h,0)$ commutes with $(f,a)$
if and only if $\Delta_a h=0$.

For $g\in\mathcal A_{d+1}$, define
\begin{equation}\label{eq:tau}
 \tau_g\colon E\to G_{p,d},\qquad \tau_g(a)=(\Delta_a g,a).
\end{equation}
This is well-defined because $\Delta_a g\in\mathcal A_d$. Moreover,
\[
 \Delta_a g+T_a\Delta_b g=\Delta_{a+b}g,
 \qquad \tau_g(a)\tau_g(b)=\tau_g(a+b).
\]
Thus $\tau_g$ is a homomorphism, injective because its second coordinate
is $a$. Its image is an abelian subgroup isomorphic to $E$, and
$\tau_g(a)^p=\tau_g(pa)=\tau_g(0)=(0,0)$, so
$\tau_g(E)\subseteq\OO_p(G_{p,d})$.

\begin{lemma}\label{lem:subgroup-linearity}
Let $\lambda\colon\OO_p(G_{p,d})\to\F_p$ be additive on commuting pairs.
For every $g\in\mathcal A_{d+1}$, the map
$\lambda\circ\tau_g\colon E\to\F_p$ is $\F_p$-linear. In particular,
\[
 \lambda(\tau_g(re_1+se_2))
 =r\lambda(\tau_g(e_1))+s\lambda(\tau_g(e_2)),
 \qquad r,s\in\F_p.
\]
\end{lemma}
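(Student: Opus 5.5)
The plan is to exploit that $\tau_g$ is a homomorphism into $\OO_p(G_{p,d})$ with abelian image, so that additivity of $\lambda$ on commuting pairs transfers directly to additivity of $\lambda\circ\tau_g$ on $E$. Linearity over $\F_p$ is then automatic, since $\F_p$-scalar multiples are iterated sums.

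First, for $a,b\in E$, the elements $\tau_g(a)$ and $\tau_g(b)$ lie in $\tau_g(E)\subseteq\OO_p(G_{p,d})$. They commute because $\tau_g(E)$ is abelian: this follows from $\tau_g(a)\tau_g(b)=\tau_g(a+b)=\tau_g(b)\tau_g(a)$. Since $\lambda$ is additive on commuting pairs, we get
\[
\lambda(\tau_g(a+b))=\lambda(\tau_g(a)\tau_g(b))=\lambda(\tau_g(a))+\lambda(\tau_g(b)).
\]
So $\lambda\circ\tau_g$ is a group homomorphism $(E,+)\to(\F_p,+)$.

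Second, every additive map between $\F_p$-vector spaces is $\F_p$-linear. For $r\in\{0,\dots,p-1\}$ we have $ra=a+\dots+a$ ($r$ terms), so induction gives $\lambda(\tau_g(ra))=r\lambda(\tau_g(a))$. Alternatively, one can use the already noted identity $\lambda(u^r)=r\lambda(u)$ together with $\tau_g(ra)=\tau_g(a)^r$. Applying additivity to $re_1+se_2$ and then homogeneity to each summand yields the displayed formula.

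There is no real obstacle here. The only point to check is that the commutation hypothesis applies, which holds because $\tau_g(E)$ is an abelian subgroup contained in $\OO_p(G_{p,d})$, both facts established just before the statement.
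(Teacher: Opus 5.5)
Your proposal is correct and follows the paper's proof: because $\tau_g$ is a homomorphism with abelian image inside $\OO_p(G_{p,d})$, commuting additivity makes $\lambda\circ\tau_g$ additive, and repeated addition then gives $\F_p$-linearity. The case $r=0$ in your induction is covered by $\lambda(e_G)=\lambda(e_Ge_G)=2\lambda(e_G)$, which forces $\lambda(e_G)=0$.
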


\begin{proof}
For $a,b\in E$, the homomorphism property and commuting additivity give
\[
 \lambda(\tau_g(a+b))
 =\lambda(\tau_g(a)\tau_g(b))
 =\lambda(\tau_g(a))+\lambda(\tau_g(b)).
\]
Repeated addition gives compatibility with scalars in $\F_p$.
\end{proof}

\paragraph{The representation.}
The group $G_{p,d}$ has a faithful representation as Pauli-like matrices.
Let $\mathcal H = \C^E$ have orthonormal basis $\{|z\rangle:z\in E\}$.
For $f \in \F_p^E$ and $a \in E$, define
\[  D_f |z\rangle = \omega^{f(z)} |z\rangle , \qquad S_a |z\rangle = |z-a\rangle , \qquad \rho(f, a)  = D_f S_a. \]

\begin{proposition}
    \label{prop:matrix-realization}
    The map $\rho\colon G_{p,d} \to U(\mathcal H)$ is a faithful unitary representation
    with $\rho(J)=\omega I$. Moreover, for $g\in\mathcal A_{d+1}$ and $a\in E$,
    \begin{equation}
        \rho(\tau_g(a)) = D_g^\dagger S_a D_g. \label{eq:represented-conjugation}
    \end{equation}
\end{proposition}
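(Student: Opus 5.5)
The proof is a direct computation on basis vectors, carried out in four steps.

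\emph{Step 1: operator identities.} For $f,h\in\F_p^E$ and $a,b\in E$ we check on each $|z\rangle$ that
\[
D_fD_h=D_{f+h},\qquad S_aS_b=S_{a+b},\qquad S_aD_h=D_{T_ah}S_a .
\]
The last identity follows from
\[
S_aD_h|z\rangle=\omega^{h(z)}|z-a\rangle ,\qquad D_{T_ah}S_a|z\rangle=\omega^{h(z-a+a)}|z-a\rangle ,
\]
which agree. Each $D_f$ is diagonal with unit-modulus entries and each $S_a$ is a permutation matrix, so both are unitary.

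\emph{Step 2: homomorphism and faithfulness.} The identities of Step 1 give
\[
\rho(f,a)\rho(h,b)=D_fS_aD_hS_b=D_fD_{T_ah}S_aS_b=D_{f+T_ah}S_{a+b}=\rho((f,a)(h,b)).
\]
Hence $\rho$ is a homomorphism into $U(\mathcal H)$. For faithfulness, suppose $\rho(f,a)=I$. Then
\[
D_fS_a|z\rangle=\omega^{f(z-a)}|z-a\rangle
\]
must equal $|z\rangle$ for all $z$, which forces $a=0$. It then forces $\omega^{f(z)}=1$ for all $z$, so $f=0$ as a function on $E$. Since we identified $\mathcal A_d\subseteq\mathcal A$ with functions via $\ev$, this gives $f=0$ in $\mathcal A_d$. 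Finally, $\rho(J)=D_1S_0=\omega I$.

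\emph{Step 3: the conjugation formula \eqref{eq:represented-conjugation}.} Since $D_g$ is diagonal unitary, $D_g^\dagger=D_{-g}$. Using $S_aD_g=D_{T_ag}S_a$,
\[
D_g^\dagger S_aD_g=D_{-g}D_{T_ag}S_a=D_{T_ag-g}S_a=D_{\Delta_ag}S_a=\rho(\Delta_ag,a)=\rho(\tau_g(a)).
\]
Here $\Delta_ag\in\mathcal A_d$ because $g\in\mathcal A_{d+1}$, as noted before \eqref{eq:tau}. This formula also gives an independent check that $\tau_g$ is a homomorphism.

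\emph{Main obstacle.} There is no genuine difficulty. The only point needing care is the sign convention: $S_a$ sends $|z\rangle$ to $|z-a\rangle$, while $T_a$ shifts by $+a$. I will verify that these conventions match in the commutation rule $S_aD_h=D_{T_ah}S_a$, because the whole homomorphism property rests on it.
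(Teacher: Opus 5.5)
Your proposal is correct and follows essentially the same route as the paper. In both, the commutation rule $S_aD_h=D_{T_ah}S_a$ (which the paper writes as $S_aD_hS_a^\dagger=D_{T_ah}$) gives the homomorphism property and the conjugation formula, and faithfulness comes from $\rho(f,a)|z\rangle=\omega^{f(z-a)}|z-a\rangle$ together with injectivity of $\ev$ on $\mathcal A$.
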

\begin{proof}
    $D_f$ is diagonal and unitary and $S_a$ is a permutation matrix, so $\rho(f,a)$ is unitary for all $(f,a) \in G_{p,d}$.
    Note that $D_f^\dagger = D_{-f}$ and $S_a^\dagger = S_{-a}$.
    Applying the definitions to basis vectors, we see that $S_a D_h S_a^\dagger = D_{T_ah}$.
    Also $D_fD_h=D_{f+h}$ and $S_aS_b=S_{a+b}$, so
    \[ D_fS_aD_hS_b=D_{f+T_ah}S_{a+b}. \]
    This proves that $\rho$ is a homomorphism.
    Its action on basis vectors is
    \[
     \rho(f,a)|z\rangle=\omega^{f(z-a)}|z-a\rangle.
    \]
    This determines $a$ and every value of $f$, since the $p$ powers of
    $\omega$ are distinct. Injectivity of evaluation on $\mathcal A_d$
    therefore makes $\rho$ faithful. Also $\rho(J)=D_1=\omega I$.
    Finally,
    \[
     D_g^\dagger S_aD_g=D_{-g}D_{T_ag}S_a
     =D_{\Delta_a g}S_a=\rho(\tau_g(a)). \qedhere
    \]
\end{proof}

\begin{remark}
    We note that $G_{p,1}$ is the generalized Pauli group acting on two $p$-dimensional qudits.
    Indeed, consider the \emph{generalized Pauli matrices} $X$ and $Z$ acting on $\C^p = \spn\{ |z\rangle : z \in \Z/p\Z \}$, given by $X |z\rangle = |z + 1\rangle$ and $Z |z\rangle = \omega^z |z\rangle$.
    The representation $\rho$ from \cref{prop:matrix-realization} gives an isomorphism between $G_{p,1}$ and the two-qudit generalized Pauli group as follows:
    \[ \rho(ax + by + c , (\begin{smallmatrix} r \\ s \end{smallmatrix})) = \omega^c (Z^a \otimes Z^b)(X^{-r} \otimes X^{-s}). \]
    When $\deg(f) > 1$, the operator $\rho(f,0) = D_f$ is a diagonal matrix on $\mathcal H = \C^p \otimes \C^p$ which still only has powers of $\omega$ on the diagonal, but which does not decompose as $\omega^c (Z^a \otimes Z^b)$.
    Adding these extra diagonal matrices turns out to be the crucial missing ingredient needed to get Mermin--Peres magic rectangles modulo odd primes.
\end{remark}

\section{Proof of the main theorem}

We first treat odd prime moduli. Fix an odd prime $p$, and put
$G=G_{p,p-1}$, with $J=(1,0)$ as above.
We show that every function $\lambda\colon\OO_p(G)\to\F_p$
that is additive on commuting pairs satisfies $\lambda(J)=0$.
By \cref{prop:system-reduction}, this rules out a classical solution
of the system in \cref{cons:finite-system}.
The key is to compare the restrictions of $\lambda$ to the abelian
subgroups $\tau_g(E)$ as the polynomial parameter $g$ varies.

By \cref{lem:subgroup-linearity}, each $\lambda\circ\tau_g$ is
linear in the translation direction. Its change from $g=0$ is therefore
determined by two coordinates, which we encode in a map $\Phi(g)$.
Commutation relations linking different subgroups give scalar conditions
on the increments $\Phi(g+k)-\Phi(g)$. The following lemma turns these
conditions into linearity of $\Phi$ on a suitable parameter space.
We take $W=\spn\{k_t:t\in\F_p\}\le\mathcal A_p$, where $k_t=x(y-tx)^{p-1}$.
The lemma applies to the increment spaces $\spn\{x,k_t\}$ in
directions $(1,t)$ and $\spn\{x,x^{p-1}y\}$ in direction $e_2$:
for every allowed increment $k$ in direction $u$, the function
$\Delta_u k$ is invariant under translation by $u$. This makes
$(\Delta_u k,0)$ commute with $\tau_g(u)$, yielding the required
scalar condition. These spaces also satisfy the spanning condition.
Linearity then lets us sum the comparisons for the $k_t$.
Using the explicit formulas for these polynomial sums, we cancel all
terms except $\lambda(J)$.

\begin{lemma}[A linearity criterion]\label{lem:linearity}
Let $\mathbb K$ be a prime field, let $W$ be a vector space over $\mathbb K$, and let
$\mathcal D\subseteq\mathbb K^2$ have at least two elements.
Assume that every two distinct elements of $\mathcal D$ are linearly independent.
Suppose subspaces $K_u\le W$, indexed by $u\in\mathcal D$, satisfy
\[
 \spn\left(\bigcup_{v\in\mathcal D\setminus\{u\}}K_v\right)=W,
 \qquad u\in\mathcal D.
\]
If $\Phi\colon W\to\mathbb K^2$ satisfies $\Phi(0)=0$ and
$u\cdot(\Phi(g+k)-\Phi(g))$ is independent of $g$ whenever
$k\in K_u$, then $\Phi$ is $\mathbb K$-linear.
\end{lemma}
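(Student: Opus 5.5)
The plan is to show that every second finite difference of $\Phi$ vanishes identically. That makes $\Phi$ additive, and additivity over a prime field is the same as $\mathbb K$-linearity.

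For $k\in W$ write $(\Delta_k\Phi)(g)=\Phi(g+k)-\Phi(g)$; this is the same notation as the finite differences used earlier, now on $W$. For $k,k'\in W$ consider
\[
 (\Delta_k\Delta_{k'}\Phi)(g)=\Phi(g+k+k')-\Phi(g+k)-\Phi(g+k')+\Phi(g).
\]
The hypothesis says that for $k\in K_u$ the scalar $u\cdot(\Delta_k\Phi)(g)$ does not depend on $g$. Comparing its values at $g$ and at $g+k'$ gives $u\cdot(\Delta_{k'}\Delta_k\Phi)(g)=0$. Since finite differences commute, this equals $u\cdot(\Delta_k\Delta_{k'}\Phi)(g)$.

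\textbf{Step 1.} Take $k\in K_u$ and $k'\in K_v$ with $u\ne v$. The vector $(\Delta_k\Delta_{k'}\Phi)(g)\in\mathbb K^2$ is orthogonal to $u$ by the observation above. By the same argument with the roles of $k$ and $k'$ swapped, it is also orthogonal to $v$. Distinct elements of $\mathcal D$ are linearly independent, so $u$ and $v$ span $\mathbb K^2$, and the vector is $0$. Note that nothing is claimed yet when $k,k'$ lie in the same $K_u$.

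\textbf{Step 2 (closing under sums).} From the identity $\Delta_{k'_1+k'_2}\Phi=T_{k'_2}\Delta_{k'_1}\Phi+\Delta_{k'_2}\Phi$, where $T$ is translation on $W$, and its analogue for negatives, we get the following. For fixed $k$, the set $\{k'\in W:\Delta_k\Delta_{k'}\Phi\equiv0\}$ is an additive subgroup of $W$. Because $\mathbb K$ is a prime field, additive subgroups are $\mathbb K$-subspaces.

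Fix $u$ and $k\in K_u$. By Step 1 this subspace contains $K_v$ for every $v\ne u$. The spanning hypothesis then forces it to be all of $W$. Hence $\Delta_k\Delta_{k'}\Phi\equiv0$ for all $k\in\bigcup_u K_u$ and all $k'\in W$.

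Now swap roles. For arbitrary $k'\in W$, the set of $k$ with $\Delta_k\Delta_{k'}\Phi\equiv0$ is again a subspace, and it contains every $K_u$. Since $|\mathcal D|\ge2$, the spanning condition for any single $u$ already shows that $\bigcup_v K_v$ spans $W$. So this subspace is $W$.

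\textbf{Step 3 (conclusion).} We now have $\Delta_k\Delta_{k'}\Phi\equiv0$ for all $k,k'\in W$. Thus $\Phi(g+k)-\Phi(g)$ is independent of $g$, and evaluating at $g=0$ with $\Phi(0)=0$ gives $\Phi(g+k)=\Phi(g)+\Phi(k)$. An additive map between $\mathbb K$-vector spaces with $\mathbb K=\F_p$ prime satisfies $\Phi(rk)=r\Phi(k)$ by repeated addition, so $\Phi$ is $\mathbb K$-linear.

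The only delicate point is Step 1 to Step 2. The hypothesis controls only one coordinate of each increment, and pairs from the same $K_u$ are not handled directly. The spanning condition, which excludes $u$ itself, is exactly what allows the subgroup argument to bypass those pairs.
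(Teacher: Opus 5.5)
Your proof is correct and follows essentially the same route as the paper: the mixed second differences vanish for $k\in K_u$, $k'\in K_v$ with $u\ne v$; this is extended first over $\spn\bigl(\bigcup_{v\ne u}K_v\bigr)=W$ and then over all of $W$, and your subgroup-closure argument is just a repackaging of the paper's telescoping over decompositions $g=g_1+\cdots+g_m$ and $k=k_1+\cdots+k_n$. One small remark: you do not need the claim that additive subgroups are $\mathbb K$-subspaces, since a subgroup containing the subspaces $K_v$ already contains their span; the claim also fails for the other prime field $\mathbb{Q}$, where one gets linearity from additivity by dividing rather than by repeated addition.
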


\begin{proof}
For distinct $u,v\in\mathcal D$, $k\in K_u$, and $h\in K_v$,
subtracting the hypothesis at $g+h$ and $g$ gives
\[
 u\cdot\bigl(\Phi(g+h+k)-\Phi(g+h)-\Phi(g+k)+\Phi(g)\bigr)=0.
\]
Interchanging $(u,k)$ and $(v,h)$ gives the same equality with $v$
in place of $u$. Since $u,v$ are linearly independent,
\[
 \Phi(g+h+k)-\Phi(g+h)=\Phi(g+k)-\Phi(g).
\]
For fixed $k\in K_u$, the difference $\Phi(g+k)-\Phi(g)$ is therefore
unchanged when any element of $K_v$, $v\ne u$, is added to $g$.
Since these are subspaces spanning $W$, any $g\in W$ admits a finite
decomposition
\[
 g=g_1+\cdots+g_m,
 \qquad g_i\in K_{v_i},\quad v_i\ne u.
\]
Applying the preceding equality successively to the shifts
$g_m,g_{m-1},\ldots,g_1$ gives
\[
\begin{aligned}
 \Phi(g+k)-\Phi(g)
 &=\Phi(g_1+\cdots+g_m+k)-\Phi(g_1+\cdots+g_m)\\
 &=\Phi(g_1+\cdots+g_{m-1}+k)-\Phi(g_1+\cdots+g_{m-1})\\
 &\mathrel{\phantom{=}}\vdots\\
 &=\Phi(g_1+k)-\Phi(g_1)\\
 &=\Phi(k)-\Phi(0)\\
 &=\Phi(k).
\end{aligned}
\]
Thus
\[
 \Phi(g+k)=\Phi(g)+\Phi(k),
 \qquad g\in W,\ k\in K_u,\ u\in\mathcal D.
\]
Now let $k\in W$ be arbitrary. Since the subspaces $K_u$ span $W$,
write
\[
 k=k_1+\cdots+k_n,
 \qquad k_i\in K_{u_i},\quad u_i\in\mathcal D.
\]
Applying the identity just proved successively to $k_n,k_{n-1},\ldots,k_1$
gives, for every $g\in W$,
\[
\begin{aligned}
 \Phi(g+k)
 &=\Phi(g+k_1+\cdots+k_n)\\
 &=\Phi(g+k_1+\cdots+k_{n-1})+\Phi(k_n)\\
 &\mathrel{\phantom{=}}\vdots\\
 &=\Phi(g+k_1)+\Phi(k_2)+\cdots+\Phi(k_n)\\
 &=\Phi(g)+\Phi(k_1)+\cdots+\Phi(k_n).
\end{aligned}
\]
In particular, taking $g=0$ gives
\[
 \Phi(k)=\Phi(0)+\Phi(k_1)+\cdots+\Phi(k_n)
        =\Phi(k_1)+\cdots+\Phi(k_n).
\]
Substituting this into the preceding calculation yields
\[
 \Phi(g+k)=\Phi(g)+\Phi(k),
 \qquad g,k\in W.
\]
Thus $\Phi$ is additive. Since $\mathbb K$ is a prime field,
additivity implies $\mathbb K$-linearity.
\end{proof}

\begin{theorem}\label{thm:vanishing}
Let $p$ be an odd prime, $G=G_{p,p-1}$, and $J=(1,0)$.
If $\lambda\colon\OO_p(G)\to\F_p$ satisfies
\[
 \lambda(uv)=\lambda(u)+\lambda(v),
 \qquad u,v\in\OO_p(G),\ uv=vu,
\]
then $\lambda(J)=0$.
\end{theorem}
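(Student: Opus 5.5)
The plan is to show that every map $\lambda$ additive on commuting pairs is forced into a linear structure, and then to evaluate one explicit sum of polynomials in two ways. First, the subgroup $\mathcal A_{p-1}\times\{0\}$ is abelian and consists of elements of order dividing $p$. Hence $\mu(f):=\lambda(f,0)$ is an $\F_p$-linear functional on $\mathcal A_{p-1}$ with $\mu(1)=\lambda(J)$, and the goal becomes $\mu(1)=0$.

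\textbf{Setting up $\Phi$.} Take $W=\spn\{x,\;x^{p-1}y,\;k_t : t\in\F_p\}\le\mathcal A_p$, where $k_t=x(y-tx)^{p-1}$. In fact $W$ is the $p$-dimensional span of the monomials $m_i=x^{i+1}y^{p-1-i}$, $0\le i\le p-1$, using $x^p=x$. For $g\in W$ define
\[
\Phi(g)=\bigl(\lambda(\tau_g(e_1))-\lambda(\tau_0(e_1)),\ \lambda(\tau_g(e_2))-\lambda(\tau_0(e_2))\bigr)\in\F_p^2,
\]
so that $\Phi(0)=0$. By \cref{lem:subgroup-linearity}, $\lambda(\tau_g(u))-\lambda(\tau_0(u))=u\cdot\Phi(g)$ for every $u\in E$.

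\textbf{Verifying the hypotheses of \cref{lem:linearity}.} Take $\mathcal D=\{(1,t):t\in\F_p\}\cup\{e_2\}$, whose elements are pairwise independent. Set $K_{(1,t)}=\spn\{x,k_t\}$ and $K_{e_2}=\spn\{x,x^{p-1}y\}$.

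For $k\in K_u$, the function $\Delta_uk$ is $u$-invariant. Indeed, $\Delta_{(1,t)}x=1$, $\Delta_{(1,t)}k_t=(y-tx)^{p-1}$ because $y-tx$ is invariant along $(1,t)$, $\Delta_{e_2}x=0$, and $\Delta_{e_2}(x^{p-1}y)=x^{p-1}$. So $(\Delta_uk,0)$ commutes with $\tau_g(u)$, and $\tau_{g+k}(u)=(\Delta_uk,0)\,\tau_g(u)$. Additivity on this commuting pair gives
\[
u\cdot\bigl(\Phi(g+k)-\Phi(g)\bigr)=\mu(\Delta_uk),
\]
which is independent of $g$.

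For the spanning condition, write $k_t=\sum_i\binom{p-1}{i}(-t)^i m_i$ and use $\sum_{t}t^j=-[\,j>0,\ (p-1)\mid j\,]$. This gives $\sum_t t\,k_t=-x^{p-1}y$, so the $K_{(1,t)}$ alone span $W$ (Vandermonde). Removing a single $K_{(1,t_0)}$ leaves $p-1$ of the $k_t$ together with $m_{p-1}=x$ and $m_{p-2}=x^{p-1}y$. I expect a Vandermonde-type determinant argument to show these still span $W$, and I flag this as the step I would check most carefully. Granting it, \cref{lem:linearity} shows that $\Phi=(\alpha,\beta)$ is linear.

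\textbf{The final computation.} Linearity, combined with the displayed identity applied at $g=0$, gives:
\begin{itemize}
\item $\alpha(k_t)+t\beta(k_t)=\mu\bigl((y-tx)^{p-1}\bigr)$ for every $t$;
\item $\alpha(x)=\mu(1)$;
\item $\beta(x^{p-1}y)=\mu(x^{p-1})$.
\end{itemize}
Sum the first identity over $t\in\F_p$. On the left, linearity and $\sum_tk_t=-x^p=-x$, $\sum_t t\,k_t=-x^{p-1}y$ give $-\mu(1)-\mu(x^{p-1})$. On the right, $\sum_t(y-tx)^{p-1}=-x^{p-1}$, so it equals $-\mu(x^{p-1})$. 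Comparing the two yields $\mu(1)=\lambda(J)=0$.

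The main obstacle is the combinatorics of $k_t$, specifically the spanning condition after removing one direction. If that fails, I would enlarge each $K_{(1,t)}$ by further $(1,t)$-invariant increments, such as $x\,(y-tx)^j$ for smaller $j$, enlarging $W$ accordingly. The power-sum cancellation in the last step is routine by comparison.
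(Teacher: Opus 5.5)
Your proposal is essentially the paper's proof: the same $W$, the same directions $\mathcal D$ and subspaces $K_u$, the same $\Phi$ fed into \cref{lem:linearity}, and the same summation over $t$ using $\sum_t k_t=-x$ and $\sum_t t\,k_t=-x^{p-1}y$. The spanning step you flagged needs no determinant argument. The identity $\sum_t k_t=-x$, which you already use, gives $k_{t_0}=-x-\sum_{s\ne t_0}k_s$, so the family still spans $W$ when $K_{(1,t_0)}$ is omitted; this is exactly how the paper verifies it.
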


\begin{proof}
\emph{Step 1: derive the comparison equations.}
Let $W\le\mathcal A_p$ and let $K_u\le W$, indexed by
$u\in\mathcal D\subseteq E$, satisfy $\Delta_u^2k=0$ for $k\in K_u$.
We first derive comparison equations valid for any such choice of $W$,
$\mathcal D$ and $K_u$. In Step~2 we will choose these spaces and directions
explicitly, ensuring that they also satisfy the spanning and independence
conditions of \cref{lem:linearity}.

For $g\in W$, define $\Phi(g)=(F(g),H(g))$ by
\[
 F(g)=\lambda(\tau_g(e_1))-\lambda(\tau_0(e_1)),\qquad
 H(g)=\lambda(\tau_g(e_2))-\lambda(\tau_0(e_2)).
\]
Then $\Phi(0)=0$, and \cref{lem:subgroup-linearity} gives
\begin{equation}\label{eq:coordinates}
 u\cdot\Phi(g)=\lambda(\tau_g(u))-\lambda(\tau_0(u)),
 \qquad u\in E.
\end{equation}
For $k\in K_u$, put $d=\Delta_uk\in\mathcal A_{p-1}$.
The assumption $\Delta_u^2k=0$ gives
\[
 T_ud-d=\Delta_ud=\Delta_u^2k=0,
\]
so $T_ud=d$. By the group law,
\[
 (d,0)\tau_g(u)=(d+\Delta_ug,u)
 =(\Delta_ug+T_ud,u)=\tau_g(u)(d,0).
\]
Thus $(d,0)$ and $\tau_g(u)$ commute. Since
$\tau_{g+k}(u)=(d,0)\tau_g(u)$, commuting additivity of $\lambda$
and \eqref{eq:coordinates} give
\begin{equation}\label{eq:comparison}
\begin{aligned}
 u\cdot\bigl(\Phi(g+k)-\Phi(g)\bigr)
 &=\lambda(\tau_{g+k}(u))-\lambda(\tau_g(u))\\
 &=\lambda((\Delta_u k,0)),\qquad g\in W,\ k\in K_u.
\end{aligned}
\end{equation}
Since $\mathcal A_{p-1}\times\{0\}$ is an additive subgroup, the map
\[
 L(f)=\lambda((f,0)),\qquad f\in\mathcal A_{p-1}
\]
is $\F_p$-linear, and we need to show that $L(1)=0$.

\smallskip
\noindent\emph{Step 2: choose the parameter spaces.}
For $t\in\F_p$, put
\[
 u_t=(1,t),\qquad k_t=x(y-tx)^{p-1},\qquad
 d_t=(y-tx)^{p-1},\qquad h=x^{p-1}y.
\]
The identities $\binom{p-1}{i}=(-1)^i$ in $\F_p$ give
\[
 (y-tx)^{p-1}=\sum_{i=0}^{p-1}t^ix^iy^{p-1-i}.
\]
For $0\le j\le p$, the sum $\sum_{t\in\F_p}t^j$ is $-1$ if
$j=p-1$ and $0$ otherwise, since $p$ is odd.
Consequently, as identities of functions,
\begin{equation}\label{eq:sums}
 \sum_t k_t=-x,\qquad
 \sum_t t k_t=-h,\qquad
 \sum_t d_t=-x^{p-1}.
\end{equation}

Set $W=\spn\{k_t:t\in\F_p\}\le\mathcal A_p$; it contains $x,h$ by
\eqref{eq:sums}. For the directions
$\mathcal D=\{u_t:t\in\F_p\}\cup\{e_2\}$, put
\[
 K_{u_t}=\spn\{x,k_t\},\qquad K_{e_2}=\spn\{x,h\}.
\]
These subspaces span $W$ after any one is omitted.
Indeed, if $K_{u_t}$ is omitted, recover $k_t$ from
$k_t=-x-\sum_{s\ne t}k_s$; if $K_{e_2}$ is omitted, every $k_t$
remains. Also $\Delta_u^2k=0$ for $k\in K_u$, since
\[
 \Delta_{u_t}x=1,\quad \Delta_{u_t}k_t=d_t,\quad
 \Delta_{u_t}d_t=0,\qquad
 \Delta_{e_2}x=0,\quad \Delta_{e_2}h=x^{p-1}.
\]

\smallskip
\noindent\emph{Step 3: conclude that $\lambda(J)=0$.}
The directions in $\mathcal D$ are pairwise linearly independent, so
\cref{lem:linearity}, with $\mathbb K=\F_p$, now shows that $F,H$
are linear.

Apply \eqref{eq:comparison} at $g=0$ to $(u,k)=(u_t,k_t)$:
\[
 F(k_t)+tH(k_t)=L(d_t).
\]
Summing over $t$ using \eqref{eq:sums} and linearity of $F$ and $H$ gives
\[
 F(x)+H(h)=L(x^{p-1}).
\]
Finally, the comparisons for $(e_1,x)$ and $(e_2,h)$ give
$F(x)=L(1)$ and $H(h)=L(x^{p-1})$. Hence $\lambda(J)=L(1)=0$.
\end{proof}

\begin{proof}[Proof of \cref{thm:main}]
First let $p$ be an odd prime and apply \cref{cons:finite-system}
with $d=p$, $G=G_{p,p-1}$, and $J=(1,0)$.
By \cref{prop:system-reduction}, a classical solution would define a function
$\lambda\colon\OO_p(G)\to\F_p$ that is additive on commuting
pairs and satisfies $\lambda(J)=1$, contradicting
\cref{thm:vanishing}. Applying the representation $\rho$ from
\cref{prop:matrix-realization} gives an operator solution on
$\mathcal H=\C^E$, since $\rho(J)=\omega I$.
For $p=2$, the Mermin--Peres magic square provides the required system.

Now let $d\ge2$, choose a prime divisor $p$ of $d$, and put $m=d/p$.
Let $Ax=b$ be a system over $\F_p$ obtained above, with operator solution
$U_1,\ldots,U_n$.
Choose coefficient representatives in $\{0,\ldots,p-1\}$ and consider
the system over $\Z/d\Z$
\[
 Ay=mb,\qquad py=0,
\]
where the second equation is imposed on each variable.
Any classical solution has the form $y=mx$ for a unique vector
$x$ over $\F_p$, and then $Ay=mb$ implies $Ax=b$, a contradiction.
The same operators solve this new system: $U_j^p=I$ gives both the
added relations and $U_j^d=I$, the chosen coefficient representatives
preserve the support of each row, and $\omega_p=\omega_d^{\,m}$.
This proves the result for every $d\ge2$.
\end{proof}

\small

\bibliographystyle{my-alphaurl}
\bibliography{contextuality}

@preamble{ "\newcommand{\DobbendeBruyn}{van Dobben de Bruyn} " }

@preamble{ "\newcommand{\Do}{Dob} " }

@preamble{ "\providecommand{\Zivny}{Živný} " }

@preamble{ "\providecommand{\Zi}{Ziv} " }

@article{Aravind-Bell,
	author={Aravind, P.K.},
	title={Bell's Theorem Without Inequalities and Only Two Distant Observers},
	journal={Found. Phys. Lett.},
	volume={15},
	number={4},
	pages={397--405},
	year={2002},
	doi={10.1023/A:1021272729475},
}

@misc{Arkhipov,
	author={Arkhipov, Alex},
	title={Extending and Characterizing Quantum Magic Games},
	year={2012},
	note={Preprint},
	url={https://arxiv.org/abs/1209.3819},
}

@article{Brassard-Broadbent-Tapp,
	author={Brassard, Gilles and Broadbent, Anne and Tapp, Alain},
	title={Quantum pseudo-telepathy},
	journal={Found. Phys.},
	volume={35},
	number={11},
	pages={1877--1907},
	year={2005},
	doi={10.1007/s10701-005-7353-4},
}

@article{Budroni-Cabello-Guhne-Kleinmann-Larsson,
	author={Budroni, Costantino and Cabello, Adán and Gühne, Otfried and Kleinmann, Matthias and Larsson, Jan-{\AA}ke},
	title={{Kochen}-{Specker} contextuality},
	journal={Rev. Mod. Phys.},
	volume={94},
	lpages={045007 (62~p.)},
	year={2022},
	doi={10.1103/RevModPhys.94.045007},
}

@inproceedings{Bulatov-Zivny,
	author={Bulatov, Andrei A. and \Zivny{}, Stanislav},
	title={Satisfiability of Commutative vs. Non-Commutative {CSPs}},
	booktitle={52nd International Colloquium on Automata, Languages, and Programming (ICALP 2025)},
	year={2025},
	pages={37 (18~p.)},
	doi={10.4230/LIPIcs.ICALP.2025.37},
}

@misc{Ciardo,
	author={Ciardo, Lorenzo},
	title={Linear equations mod \(n\) are pseudo-telepathic},
	note={Preprint},
	year={2026},
	url={https://arxiv.org/abs/2609.14632},
}

@article{Chung-Okay-Sikora,
	author={Chung, Ho Yiu and Okay, Cihan and Sikora, Igor},
	title={Simplicial techniques for operator solutions of linear constraint systems},
	journal={Topology Appl.},
	volume={348},
	year={2024},
	pages={108883 (39~p.)},
	doi={10.1016/j.topol.2024.108883},
}

@article{Cleve-Liu-Slofstra,
	author={Cleve, Richard and Liu, Li and Slofstra, William},
	title={Perfect commuting-operator strategies for linear system games},
	journal={J. Math. Phys.},
	volume={58},
	number={1},
	year={2017},
	pages={012202 (7~p.)},
	doi={10.1063/1.4973422},
}

@incollection{Cleve-Mittal,
	author={Cleve, Richard and Mittal, Rajat},
	title={Characterization of binary constraint system games},
	booktitle={Automata, Languages, and Programming (ICALP 2014)},
	pages={320--331},
	publisher={Springer},
	address={Berlin},
	__series={Lecture Notes in Comput. Sci.},
	__volume={8572},
	year={2014},
	doi={10.1007/978-3-662-43948-7_27},
}

@misc{Coladangelo-Stark,
	author={Coladangelo, Andrea and Stark, Jalex},
	title={Robust self-testing for linear constraint system games},
	note={Preprint},
	year={2019},
	url={https://arxiv.org/abs/1709.09267},
}

@misc{Dobben-Roberson-Negami,
	author={\DobbendeBruyn{}, Josse and Roberson, David E.},
	title={$1$-$2$-$\infty$ and beyond: Connections between solution groups, harmonic homomorphisms, and {N}egami's conjecture},
	note={In preparation},
	year={2026},
}

@article{Frembs-Okay-Chung,
	author={Frembs, Markus and Okay, Cihan and Chung, Ho Yiu},
	title={No quantum solutions to linear constraint systems in odd dimension from {P}auli group and diagonal {C}liffords},
	journal={Quantum},
	volume={9},
	pages={1583 (23~p.)},
	year={2025},
	doi={10.22331/q-2025-01-08-1583},
}

@article{Goldberg,
	author={Goldberg, Adina},
	title={Synchronous linear constraint system games},
	journal={J. Math. Phys.},
	volume={62},
	number={3},
	pages={032201 (9~p.)},
	year={2021},
	doi={10.1063/5.0025647},
}

@article{Kim-Paulsen-Schafhauser,
	author={Kim, Se-Jin and Paulsen, Vern and Schafhauser, Christopher},
	title={A synchronous game for binary constraint systems},
	journal={J. Math. Phys.},
	volume={59},
	number={3},
	pages={032201 (17~p.)},
	year={2018},
	doi={10.1063/1.4996867},
}

@article{Mermin,
	author={Mermin, N. David},
	title={Simple unified form for the major no-hidden-variables theorems},
	journal={Phys. Rev. Lett.},
	volume={65},
	number={27},
	pages={3373--3376},
	year={1990},
	doi={10.1103/PhysRevLett.65.3373},
}

@article{Mermin-pentagram,
	author={Mermin, N. David},
	title={Hidden variables and the two theorems of {J}ohn {B}ell},
	journal={Rev. Mod. Phys.},
	volume={65},
	number={3},
	pages={803--815},
	year={1993},
	doi={10.1103/RevModPhys.65.803},
}

@article{PRSS,
	author={Paddock, Connor and Russo, Vincent and Silverthorne, Turner and Slofstra, William},
	title={{A}rkhipov's theorem, graph minors, and linear system nonlocal games},
	journal={Algebr. Comb.},
	volume={6},
	number={4},
	pages={1119--1162},
	year={2023},
	doi={10.5802/alco.292},
}

@article{Peres,
	author={Peres, Asher},
	title={Incompatible results of quantum measurements},
	journal={Phys. Lett. A},
	volume={151},
	number={3,4},
	year={1990},
	pages={107--108},
	doi={10.1016/0375-9601(90)90172-K},
}

@article{Qassim-Wallman,
	author={Qassim, Hammam and Wallman, Joel J.},
	title={Classical vs quantum satisfiability in linear constraint systems modulo an integer},
	journal={J. Phys. A, Math. Theor.},
	volume={53},
	number={38},
	pages={385304 (16~p.)},
	year={2020},
	doi={10.1088/1751-8121/aba306},
}

@article{Slofstra-Tsirelson,
	author={Slofstra, William},
	title={{T}sirelson's problem and an embedding theorem for groups arising from non-local games},
	journal={J. Am. Math. Soc},
	volume={33},
	number={1},
	year={2020},
	pages={1--56},
	doi={10.1090/jams/929},
}

@article{Slofstra-closed,
	author={Slofstra, William},
	title={The set of quantum correlations is not closed},
	journal={Forum Math. Pi},
	volume={7},
	pages={e1 (41~p.)},
	year={2019},
}

@misc{Slofstra-Zhang,
	author={Slofstra, William and Zhang, Lu-Ming},
	title={Operator solutions of linear systems and small cancellation},
	year={2024},
	note={Preprint},
	url={https://arxiv.org/abs/2412.10305},
}

\end{document}